 \definecolor{BLACK}{gray}{0}
 \definecolor{WHITE}{gray}{1}
 \definecolor{RED}{rgb}{1,0,0}
 \definecolor{GREEN}{rgb}{0,1,0}
 \definecolor{BLUE}{rgb}{0,0,1}
 \definecolor{CYAN}{cmyk}{1,0,0,0}
 \definecolor{MAGENTA}{cmyk}{0,1,0,0}
 \definecolor{YELLOW}{cmyk}{0,0,1,0}
\theoremstyle{plain}
\newtheorem{thm}{\protect\theoremname}
\definecolor{myurlcolor}{rgb}{0,0,0.7}
\providecommand{\theoremname}{Theorem}
\providecommand{\theoremname}{Theorem}
\begin{document}

\title{Measuring Quantum Coherence with Entanglement}

\author{Alexander Streltsov}

\email{alexander.streltsov@icfo.es}

\affiliation{ICFO-The Institute of Photonic Sciences, Mediterranean Technology
Park, 08860 Castelldefels (Barcelona), Spain}

\author{Uttam Singh}
\email{uttamsingh@hri.res.in}

\affiliation{Harish-Chandra Research Institute, Chhatnag Road, Jhunsi, Allahabad
211 019, India}

\author{Himadri Shekhar Dhar}
\email{himadrisdhar@hri.res.in}

\affiliation{Harish-Chandra Research Institute, Chhatnag Road, Jhunsi, Allahabad
211 019, India}

\affiliation{School of Physical Sciences, Jawaharlal Nehru University, New Delhi
110 067, India}

\author{Manabendra Nath Bera}
\email{manabbera@hri.res.in}

\affiliation{Harish-Chandra Research Institute, Chhatnag Road, Jhunsi, Allahabad
211 019, India}

\author{Gerardo Adesso}

\email{gerardo.adesso@nottingham.ac.uk}

\affiliation{$\mbox{School of Mathematical Sciences, The University of Nottingham, University Park, Nottingham NG7 2RD, United Kingdom}$}
\begin{abstract}
Quantum coherence is an essential ingredient in quantum information
processing and plays a central role in emergent fields such as nanoscale
thermodynamics and quantum biology. However, our understanding and
quantitative characterization of coherence as an operational resource
are still very limited. Here we show that any degree of coherence with respect to some reference basis can be converted to entanglement via incoherent operations. This finding allows us
to define a novel general class of measures of coherence for a quantum
system of arbitrary dimension, in terms of the maximum bipartite entanglement
that can be generated via incoherent operations applied to the system
and an incoherent ancilla. The resulting measures are proven to be
valid coherence monotones satisfying all the requirements dictated
by the resource theory of quantum coherence. We demonstrate the usefulness of our approach by proving that the fidelity-based geometric measure of coherence is a full convex coherence monotone, and deriving a closed formula for it on arbitrary single-qubit states. Our work provides a clear
quantitative and operational connection between coherence and entanglement,
two landmark manifestations of quantum theory and both key enablers
for quantum technologies.
\end{abstract}

\pacs{03.65.Ud, 03.65.Ta, 03.67.Ac, 03.67.Mn}

\date{June 10, 2015}

\maketitle
\noindent \textbf{\emph{Introduction}}\textbf{.}---Coherence is a
fundamental aspect of quantum physics that encapsulates the defining
features of the theory \cite{Leggett1980}, from the superposition
principle to quantum correlations. It is a key component in various
quantum information and estimation protocols and is primarily accountable
for the advantage offered by quantum tasks versus classical ones \cite{Giovannetti2011,Nielsen10}.
In general, coherence is an important physical resource in low-temperature
thermodynamics \cite{Aberg14,Varun14,Oppenheim14,Rudolf114,Rudolf214},
for exciton and electron transport in biomolecular networks \cite{Plenio2008,Aspuru2009,Lloyd2011,Li2012,Huelga13,Levi14},
and for applications in nanoscale physics \cite{VazquezH2012,Karlstrom11}.
Experimental detection of coherence in living complexes \cite{Engel2007,Collini2010}
and creation of coherence in hot systems \cite{Rybak2011} have also
been reported.

While the theory of quantum coherence is historically well developed
in quantum optics \cite{Glauber63,Sudarshan63,Mandel95,Kim02,Asboth05,Vogel1,Vogel2,Vogel3}, a rigorous framework to quantify coherence for general
states in information theoretic terms has only
been attempted recently \cite{Baumgratz2014,Levi14,Girolami14,Vogel2,Smyth14,Diego15}.
This framework is based on identifying the set of incoherent
states and a class of `free' operations, named incoherent quantum
channels, that map the set onto itself \cite{Baumgratz2014,Levi14}.
The resulting resource theory of coherence is in direct analogy
with the resource theory of entanglement \cite{HorodeckiRMP09},
in which local operations and classical communication are the `free' operations that map the set of separable states onto itself \cite{Fernando2013}. Within such a framework for coherence,
one can define suitable measures that vanish for any incoherent state,
and satisfy specific monotonicity requirements under incoherent
channels. Measures that respect these conditions gain the attribute
of coherence monotones, in analogy with entanglement monotones \cite{Vidal2000}.
Examples of coherence monotones include the relative entropy and the
$l_{1}$-norm of coherence \cite{Baumgratz2014}.
Intuitively, both coherence and
entanglement  capture quantumness of a physical system, and it
is well known that entanglement stems from the superposition principle,
which is also the essence of coherence. It is then legitimate to ask
how can one resource emerge \textit{quantitatively} from the other
\cite{Asboth05,Vogel2}.

\begin{figure}[t]
\centering \includegraphics[width=8.5cm]{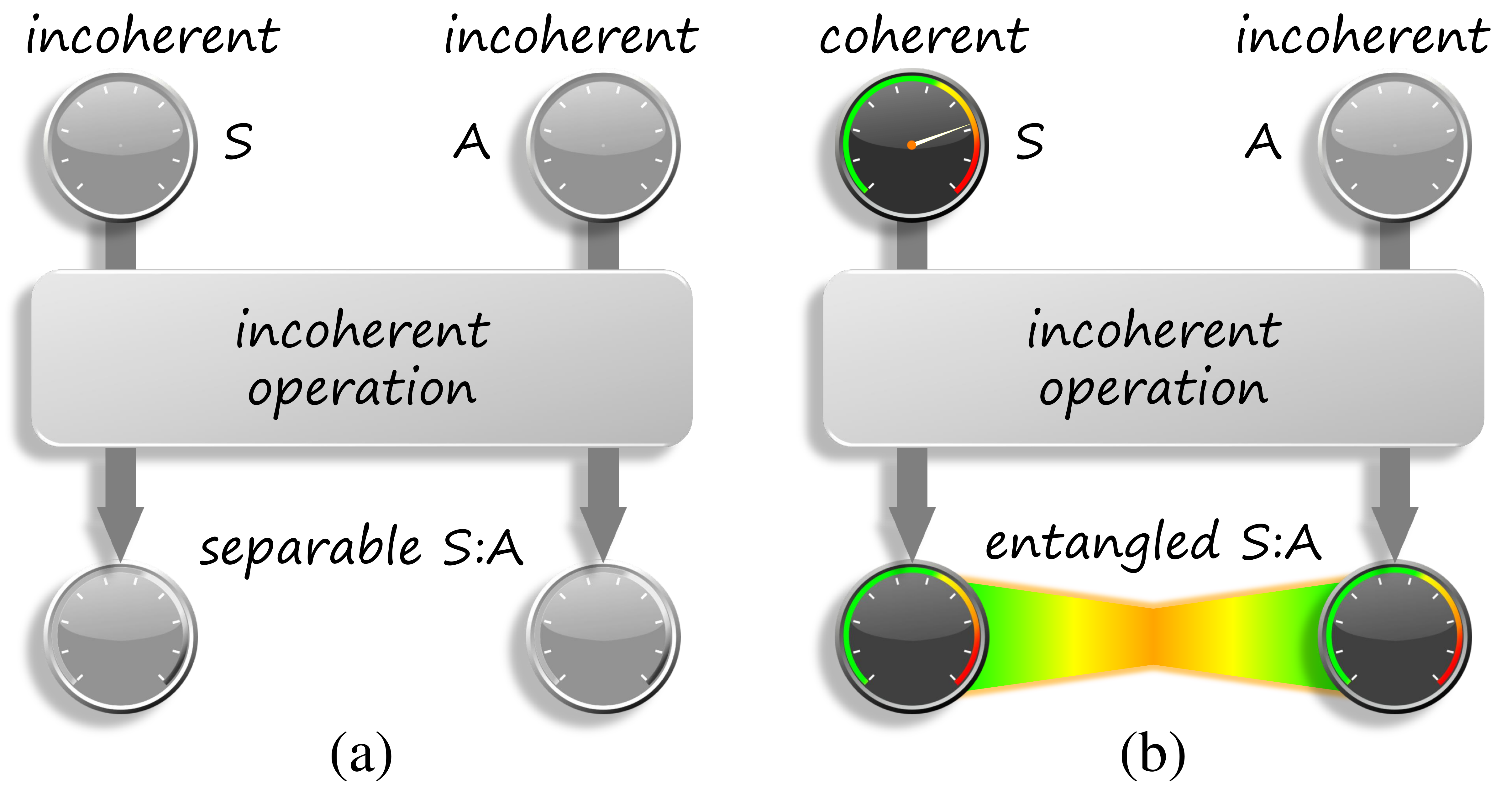} \caption{(a)
Incoherent operations cannot generate entanglement from incoherent input
states. (b) Conversely, we show that any nonzero coherence in the
input state of a system $S$ can be converted to entanglement
via incoherent operations on $S$ and an incoherent ancilla $A$.
Input coherence and output entanglement are quantitatively equivalent:
For any entanglement monotone $E$, the maximum entanglement generated between $S$ and $A$ by incoherent operations defines
a faithful coherence monotone $C_{E}$ on the initial state of $S$.}

\label{cofig}
\end{figure}


In this Letter, we provide a mathematically rigorous approach to resolve
the above question, using a common frame
to quantify quantumness in terms of coherence and entanglement. In
particular,  we show that any nonzero amount
of coherence in a system $S$ can be converted to (distillable)
entanglement between $S$ and an initially incoherent ancilla $A$,
by means of incoherent operations. This allows us to formulate a novel, general
method to quantify coherence in terms of entanglement (see Fig.~\ref{cofig}). Namely we prove that, given an arbitrary set of entanglement
monotones $\{E\}$, one can define a corresponding class of coherence
monotones $\{C_{E}\}$ that satisfy all the requirements of Ref.~\cite{Baumgratz2014}.
The input coherence $C_{E}$ of $S$ is  defined as the
maximum output entanglement $E$ over all incoherent operations on
$S$ and $A$. We explicitly evaluate the maximization in some relevant instances, defining novel coherence monotones such as the fidelity-based geometric measure of coherence. These results provide powerful advances for the operational quantification of  coherence.

\smallskip{}

\noindent\textbf{\emph{Characterizing coherence}}\textbf{.}---For
an arbitrary fixed reference basis $\{\ket{i}\}$, the incoherent
states are defined as \cite{Baumgratz2014}
\begin{equation}
\sigma={\sum}_{i}p_{i}\ket{i}\bra{i},
\end{equation}
where $p_{i}$ are  probabilities. Any state which cannot
be written as above is defined {\it coherent} \cite{Baumgratz2014}.
Note that, unlike other resources in information theory, coherence is basis-dependent. The reference basis with respect to which coherence is measured depends on the physical problem under investigation; it is e.g.~the energy basis for transport phenomena in engineered and biological domains \cite{Huelga13}, or the eigenbasis of the generator of an unknown phase shift in quantum metrology \cite{Giovannetti2011}.

A completely positive trace preserving map $\Lambda$ is said to be
an incoherent operation if it can be written as
\begin{equation}
\Lambda[\rho]={\sum}_{l}K_{l}\rho K_{l}^{\dagger},\label{eq:incoherent_operation}
\end{equation}
where the defining operators $K_{l}$, called incoherent Kraus operators, map every incoherent
state to some other incoherent state, i.e.~$K_{l}{\cal \mathcal{I}}K_{l}^{\dagger}\subseteq\mathcal{I}$, where ${\cal {I}}$ is the set
of incoherent states.

Following established notions from entanglement theory \cite{Vedral1997,Vedral1998,Plenio2007,HorodeckiRMP09},
Baumgratz \emph{et al}. proposed the following postulates for a measure
of coherence $C(\rho)$ in Ref.~\cite{Baumgratz2014}:
\begin{itemize}
\item {(C1)} $C(\rho)\geq0$, and $C(\sigma)=0$ if and only if $\sigma\in\mathcal{I}$.\\[-18pt]
\item {(C2)} $C(\rho)$ is nonincreasing under incoherent operations,
i.e., $C(\rho)\geq C(\Lambda[\rho])$ with $\Lambda[{\cal {I}]\subseteq{\cal {I}}}$.\\[-18pt]
\item {(C3)} $C(\rho)$ is nonincreasing on average under selective incoherent
operations, i.e., $C(\rho)\geq\sum_{l}p_{l}C(\varsigma_{l})$, with probabilities
$p_{l}=\mathrm{Tr}[K_{l}\rho K_{l}^{\dagger}]$, states $\varsigma_{l}=K_{l}\rho K_{l}^{\dagger}/p_{l}$,
and incoherent Kraus operators $K_{l}$ obeying $K_{l}\mathcal{I}K_{l}^{\dagger}\subseteq\mathcal{I}$.\\[-18pt]
\item {(C4)} $C(\rho)$ is a convex function of density matrices, i.e.,
$C(\sum_{i}p_{i}\rho_{i})\leq\sum_{i}p_{i}C(\rho_{i})$.
\end{itemize}
Note that conditions (C3) and (C4) automatically
imply condition (C2). The reason we listed all conditions above
is that (similar to entanglement measures) there might exist meaningful
quantifiers of coherence which satisfy conditions (C1) and (C2), but
for which conditions (C3) and (C4) are either violated or cannot be
proven. Following the analogous notion from entanglement theory, we
call a quantity which satisfies conditions (C1), (C2), and (C3) a
\emph{coherence monotone}.

Examples of functionals that satisfy all the four properties mentioned
above include the $l_{1}$-norm of coherence \cite{Baumgratz2014}
$C_{l_{1}}(\rho)=\sum_{i\neq j}|\rho_{ij}|$ and the relative entropy
of coherence \cite{Baumgratz2014}
\begin{equation}
C_{\mathrm{r}}(\rho)=\min_{\sigma\in{\cal I}}H(\rho||\sigma)\label{eq:Cr}
\end{equation}
with the quantum relative entropy $H(\rho||\varsigma)=\mathrm{Tr}[\rho\log_{2}\rho]-\mathrm{Tr}[\rho\log_{2}\varsigma]$.
As was shown in \cite{Baumgratz2014}, the relative entropy of coherence
can also be written as $C_{\mathrm{r}}(\rho)=H(\rho_{d})-H(\rho)$,
where $\rho_{d}$ is the diagonal part of the density matrix $\rho$
in the reference basis $\{\ket{i}\}$ and $H$ is the von Neumann
entropy.

\smallskip{}

\noindent\textbf{\emph{Bipartite coherence}}\textbf{.}---We first
extend the framework of coherence to the bipartite scenario (see also
\cite{Frozen2014}); the following definitions extend straightforwardly to multipartite systems. In particular, for a bipartite system with two
subsystems $X$ and $Y$, and with respect to a fixed reference product
basis $\{\ket{i}^{X}\otimes\ket{j}^{Y}\}$, we define bipartite incoherent
states as follows:
\begin{equation}
\rho^{XY}={\sum}_{k}p_{k}\sigma_{k}^{X}\otimes\tau_{k}^{Y}.\label{eq:incoherent_state}
\end{equation}
Here, $p_{k}$ are probabilities and the states $\sigma_{k}^{X}$
and $\tau_{k}^{Y}$ are incoherent states on the subsystem $X$ and
$Y$ respectively, i.e.~$\sigma_{k}^{X}=\sum_{i}p'_{ik}\ket{i}\bra{i}^{X}$
and $\tau_{k}^{Y}=\sum_{j}p''_{jk}\ket{j}\bra{j}^{Y}$ for probabilities
$p'_{ik}$ and $p''_{jk}$. Note that the states in Eq.~(\ref{eq:incoherent_state}) are always separable.

We next define bipartite incoherent operations as
in Eq.~(\ref{eq:incoherent_operation}), with incoherent Kraus operators
$K_{l}$ such that $K_{l}{\cal I}K_{l}^{\dagger}\subseteq{\cal I}$,
where ${\cal I}$ is now the set of bipartite incoherent states defined
in Eq.~(\ref{eq:incoherent_state}).
An example of bipartite incoherent operation is the two-qubit
CNOT gate $U_{\mathrm{CNOT}}$. It is not possible to create coherence
from an incoherent two-qubit state by using the CNOT gate, since it
takes any pure incoherent state $\ket{i}\otimes\ket{j}$ to another
pure incoherent state, $
U_{\mathrm{CNOT}}\left(\ket{i}\otimes\ket{j}\right)=\ket{i}\otimes\ket{\mathrm{mod}\left(i+j,2\right)}$.
The CNOT
gate can be used however to create entanglement, e.g.~it is well known that the state $U_{\mathrm{CNOT}}(\ket{\psi}\otimes\ket{0})$ is entangled
for any coherent state $\ket{\psi}$ \cite{Nielsen10}.

\smallskip{}

\noindent\textbf{\emph{Converting coherence to entanglement}}\textbf{.}---Referring
to Fig.~\ref{cofig}, we say that the coherence in the initial state $\rho^{S}$ of a (finite-dimensional)
system $S$ can be converted to entanglement via incoherent operations if, by attaching
an ancilla $A$ initialized in a reference incoherent state $\ket{0}\bra{0}^{A}$,
the final system-ancilla state $\Lambda^{SA}[\rho^{S}\otimes\ket{0}\bra{0}^{A}]$
is entangled for some incoherent operation $\Lambda^{SA}$. Note that
incoherent system states $\sigma^{S}$ cannot be used for conversion to entangled states in this way, since for any incoherent state $\sigma^{S}$ the
state $\Lambda^{SA}[\sigma^{S}\otimes\ket{0}\bra{0}^{A}]$ will be of
the form given in Eq.~(\ref{eq:incoherent_state}), and thus separable.

Entanglement can instead be generated by incoherent operations if the initial  $\rho^{S}$ is coherent, as in the two-qubit CNOT example. It is then natural to ask: Can {\it any} nonzero amount of coherence be converted to entanglement via incoherent operations?  To answer this, we first consider distance-based
measures of entanglement $E_{D}$ and coherence $C_{D}$
\cite{Vedral1997,Vedral1998,Plenio2007,Baumgratz2014,Frozen2014}:
\begin{equation}\label{ECD}
E_{D}(\rho)=\min_{\chi\in{\cal S}}D(\rho,\chi),\qquad C_{D}(\rho)=\min_{\sigma\in{\cal I}}D(\rho,\sigma).
\end{equation}
Here, ${\cal S}$ is the set of separable states and ${\cal I}$ is
the set of incoherent states. Moreover, we demand that the distance
$D$ be contractive under quantum operations,
\begin{equation}
D(\Lambda[\rho],\Lambda[\varsigma])\leq D(\rho,\varsigma)\label{eq:contractive}
\end{equation}
for any completely positive trace preserving map $\Lambda$. This
implies that $E_{D}$ does not increase under local operations and
classical communication \cite{Vedral1997,Vedral1998}, and $C_{D}$
does not increase under incoherent operations \cite{Baumgratz2014}.
Equipped with these tools we are now in position to present the first
result of this Letter.
\begin{thm}
\label{thm:1}
\vspace*{-2pt}
For any contractive distance $D$, the amount of (distance-based)
entanglement $E_{D}$ generated from a state $\rho^{S}$ via an incoherent
operation $\Lambda^{SA}$ is bounded above by its (distance-based)
coherence $C_{D}$:
\begin{equation}
\mbox{\ensuremath{E_{D}^{S:A}\left(\Lambda^{SA}\left[\rho^{S}\otimes\ket{0}\bra{0}^{A}\right]\right)\leq C_{D}\left(\rho^{S}\right)}}.\label{eq:bound}
\end{equation}
\end{thm}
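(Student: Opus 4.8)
The plan is to produce an explicit separable state that is close to the output $\Lambda^{SA}[\rho^{S}\otimes\ket{0}\bra{0}^{A}]$ and then bound the distance to it by $C_{D}(\rho^{S})$ using contractivity alone. First I would let $\tilde{\sigma}$ be a closest incoherent state to $\rho^{S}$, so that $C_{D}(\rho^{S})=D(\rho^{S},\tilde{\sigma})$ (the minimum in Eq.~(\ref{ECD}) is attained by compactness of $\mathcal{I}$). The pivotal observation is that the product $\tilde{\sigma}\otimes\ket{0}\bra{0}^{A}$ is a bipartite incoherent state of the form in Eq.~(\ref{eq:incoherent_state}), since both factors are incoherent. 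Because $\Lambda^{SA}$ is by hypothesis an incoherent operation, it maps this state to another bipartite incoherent state, which, as noted below Eq.~(\ref{eq:incoherent_state}), is necessarily separable.

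Next I would feed $\Lambda^{SA}[\tilde{\sigma}\otimes\ket{0}\bra{0}^{A}]$ into the minimization defining $E_{D}^{S:A}$ of the output. Since $E_{D}$ is an infimum over \emph{all} separable states and this particular state is separable, it can only decrease the infimum, giving
\[
E_{D}^{S:A}\!\left(\Lambda^{SA}[\rho^{S}\otimes\ket{0}\bra{0}^{A}]\right)\leq D\!\left(\Lambda^{SA}[\rho^{S}\otimes\ket{0}\bra{0}^{A}],\,\Lambda^{SA}[\tilde{\sigma}\otimes\ket{0}\bra{0}^{A}]\right).
\]
Contractivity of $D$ under the CPTP map $\Lambda^{SA}$, Eq.~(\ref{eq:contractive}), then strips the operation from both arguments, bounding the right-hand side by $D(\rho^{S}\otimes\ket{0}\bra{0}^{A},\,\tilde{\sigma}\otimes\ket{0}\bra{0}^{A})$.

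It remains to show that appending the fixed ancilla does not change the distance, namely $D(\rho^{S}\otimes\ket{0}\bra{0}^{A},\,\tilde{\sigma}\otimes\ket{0}\bra{0}^{A})=D(\rho^{S},\tilde{\sigma})$, and I expect this to be the one step needing care. I would establish it by invoking contractivity twice in opposite directions: appending the ancilla is itself a CPTP map, so Eq.~(\ref{eq:contractive}) yields $D(\rho^{S}\otimes\ket{0}\bra{0}^{A},\,\tilde{\sigma}\otimes\ket{0}\bra{0}^{A})\leq D(\rho^{S},\tilde{\sigma})$, while partial trace over $A$ is also CPTP and recovers the reduced states exactly, yielding the reverse inequality $D(\rho^{S},\tilde{\sigma})\leq D(\rho^{S}\otimes\ket{0}\bra{0}^{A},\,\tilde{\sigma}\otimes\ket{0}\bra{0}^{A})$. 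The two bounds force equality, and chaining them back through the previous estimates gives $E_{D}^{S:A}(\Lambda^{SA}[\rho^{S}\otimes\ket{0}\bra{0}^{A}])\leq D(\rho^{S},\tilde{\sigma})=C_{D}(\rho^{S})$, as claimed. The conceptual content is thus entirely the interplay between \emph{incoherent implies separable} for bipartite states and the contractivity of $D$; notably, no optimization over $\Lambda^{SA}$ is needed, since the bound holds for every incoherent operation simultaneously and hence survives taking the maximum over $\Lambda^{SA}$ later.
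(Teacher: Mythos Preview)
Your argument is correct and mirrors the paper's proof essentially step for step: pick a closest incoherent state, push it through $\Lambda^{SA}$ to obtain a separable comparison state, and invoke contractivity. The only difference is that you spell out the ancilla-invariance $D(\rho^{S}\otimes\ket{0}\bra{0}^{A},\tilde{\sigma}\otimes\ket{0}\bra{0}^{A})=D(\rho^{S},\tilde{\sigma})$ via two applications of contractivity (appending and tracing out), whereas the paper simply asserts this equality as a consequence of contractivity.
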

\vspace*{-12pt}
\begin{proof} Let $\sigma^{S}$ be the closest incoherent state to
$\rho^{S}$, i.e., $C_{D}(\rho^{S})=D(\rho^{S},\sigma^{S})$. The
contractivity of the distance $D$ further implies the equality: $D(\rho^{S},\sigma^{S})=D(\rho^{S}\otimes\ket{0}\bra{0}^{A},\sigma^{S}\otimes\ket{0}\bra{0}^{A})$.
In the final step, note that the application of an incoherent operation
$\Lambda^{SA}$ to the incoherent state $\sigma^{S}\otimes\ket{0}\bra{0}^{A}$
brings it to another incoherent---and thus separable---state. Applying
Eq.~(\ref{eq:contractive}) and combining the aforementioned results
we arrive at the desired inequality: 
$C_{D}(\rho^{S})\geq D(\Lambda^{SA}[\rho^{S}\otimes\ket{0}\bra{0}^{A}],\Lambda^{SA}[\sigma^{S}\otimes\ket{0}\bra{0}^{A}])\geq E_{D}^{S:A}(\Lambda^{SA}[\rho^{S}\otimes\ket{0}\bra{0}^{A}])$.
 \end{proof} \vspace*{-2pt}
This result
provides a strong link between the resource frameworks of entanglement and coherence. An even stronger link exists when choosing specifically $D$ as the relative entropy.
The corresponding quantifiers are the relative entropy of entanglement
$E_{\mathrm{r}}$ \cite{Vedral1997}, and the relative entropy of
coherence $C_{\mathrm{r}}$ \cite{Baumgratz2014} introduced
in Eq.~(\ref{eq:Cr}). Importantly, the inequality (\ref{eq:bound})
can be saturated for these measures if the dimension of the ancilla
is not smaller than that of the system, $d_{A}\geq d_{S}$.
In this case there always exists an incoherent operation $\Lambda^{SA}$
such that
\begin{equation}
\mbox{\ensuremath{E_{\mathrm{r}}^{S:A}\left(\Lambda^{SA}\left[\rho^{S}\otimes\ket{0}\bra{0}^{A}\right]\right)=C_{\mathrm{r}}\left(\rho^{S}\right)}}.\label{eq:optimal}
\end{equation}
To prove this statement, we consider the unitary operation
\begin{align}
U & ={\sum}_{i=0}^{d_{S}-1}{\sum}_{j=0}^{d_{S}-1}\ket{i}\bra{i}^{S}\otimes\ket{\mathrm{mod}(i+j,d_{S})}\bra{j}^{A}\nonumber \\
 & +{\sum}_{i=0}^{d_{S}-1}{\sum}_{j=d_{S}}^{d_{A}-1}\ket{i}\bra{i}^{S}\otimes\ket{j}\bra{j}^{A}.\label{eq:CNOT}
\end{align}
Note that for two qubits this unitary is equivalent to the CNOT gate
with $S$ as the control qubit and $A$ as the target qubit. It can
be seen by inspection that this unitary is incoherent (i.e., the state
$\Lambda^{SA}[\rho^{SA}]=U\rho^{SA}U^{\dagger}$ is incoherent for
any incoherent state $\rho^{SA}$), and maps the state $\rho^{S}\otimes\ket{0}\bra{0}^{A}$
to the state
\begin{equation}
\mbox{\ensuremath{\Lambda^{SA}\left[\rho^{S}\otimes\ket{0}\bra{0}^{A}\right]={\sum}_{i,j}\ \rho_{ij}\ket{i}\bra{j}^{S}\otimes\ket{i}\bra{j}^{A}}},
\end{equation}
where $\rho_{ij}$ are the matrix elements of $\rho^{S}=\sum_{i,j}\rho_{ij}\ket{i}\bra{j}^{S}$.
In the next step we use the fact that for any quantum state $\varsigma^{SA}$
the relative entropy of entanglement is bounded below as follows \cite{Plenio2000}:
$E_{\mathrm{r}}^{S:A}(\varsigma^{SA})\geq H(\varsigma^{S})-H(\varsigma^{SA})$. Applied
to the state $\Lambda^{SA}[\rho^{S}\otimes\ket{0}\bra{0}^{A}]$, this
inequality reduces to
\begin{equation}
\mbox{\ensuremath{E_{\mathrm{r}}^{S:A}\left(\Lambda^{SA}\left[\rho^{S}\otimes\ket{0}\bra{0}^{A}\right]\right)\geq H\left({\sum}_{i}\ \rho_{ii}\ket{i}\bra{i}^{S}\right)-H(\rho^{S})}}.
\end{equation}
Noting that the right-hand side of this inequality is equal to the
relative entropy of coherence $C_{\mathrm{r}}(\rho^{S})$ \cite{Baumgratz2014},
we obtain $E_{\mathrm{r}}^{S:A}(\Lambda^{SA}[\rho^{S}\otimes\ket{0}\bra{0}^{A}])\geq C_{\mathrm{r}}(\rho^{S})$.
The proof of Eq.~(\ref{eq:optimal}) is complete by combining this
result with Theorem~\ref{thm:1}.

The results presented above also hold for the distillable entanglement
$E_{\mathrm{d}}$. Namely, the relative entropy of coherence
$C_{\mathrm{r}}$ also serves as an upper bound for the conversion to
distillable entanglement via incoherent operations, and the equality in Eq.~(\ref{eq:optimal}) still holds if $E_{\mathrm{r}}$ is replaced by $E_{\mathrm{d}}$, and the  incoherent unitary of Eq.~(\ref{eq:CNOT}) is applied.
This  follows from Theorem~\ref{thm:1}, together with
the fact that distillable entanglement admits the following bounds \cite{Horodecki2000, Devetak2005}:
 $H(\varsigma^{S})-H(\varsigma^{SA}) \leq E_{\mathrm{d}}^{S:A}\leq E_{\mathrm{r}}^{S:A}$.

This shows that the degree of (relative entropy of) coherence in the
initial state of $S$ can be exactly converted to an equal
degree of (distillable or relative entropy of) entanglement between $S$
and the incoherent ancilla $A$ by a suitable incoherent
operation, that is a generalized CNOT gate. We can now settle the general question posed above.
\begin{thm}
\label{thm:2}
\vspace*{-2pt}
A state $\rho^{S}$ can be converted to an entangled state via incoherent operations if and only if $\rho^{S}$ is coherent.\end{thm}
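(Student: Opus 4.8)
The plan is to establish the biconditional by treating its two implications separately, since the quantitative machinery developed above already supplies both directions almost immediately.

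For the ``only if'' implication I would argue by contraposition: assuming $\rho^{S}$ is incoherent, I show that no incoherent operation can render the system--ancilla state entangled. If $\rho^{S}=\sigma^{S}\in\mathcal{I}$, then $\sigma^{S}\otimes\ket{0}\bra{0}^{A}$ is a bipartite incoherent state, and by the defining property of incoherent operations any $\Lambda^{SA}$ maps it to another bipartite incoherent state of the form in Eq.~(\ref{eq:incoherent_state}), which is separable. This is exactly the observation recorded in the text preceding the theorem, so the output can never be entangled. (Equivalently, one could invoke Theorem~\ref{thm:1} with any faithful contractive distance, for which $C_{D}(\sigma^{S})=0$ forces $E_{D}=0$; but the direct separability argument is more elementary and needs no faithfulness.)

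For the ``if'' implication I would exhibit an explicit conversion and certify that its output is entangled. Choosing an ancilla with $d_{A}\geq d_{S}$ and applying the generalized CNOT unitary $U$ of Eq.~(\ref{eq:CNOT})---which is incoherent---produces the output state $\sum_{i,j}\rho_{ij}\ket{i}\bra{j}^{S}\otimes\ket{i}\bra{j}^{A}$. The crux is to guarantee that this state is entangled whenever $\rho^{S}$ is coherent. For this I would invoke the saturation identity $E_{\mathrm{r}}^{S:A}=C_{\mathrm{r}}(\rho^{S})$ derived above, together with the faithfulness of the relative entropy of coherence: by condition (C1), $C_{\mathrm{r}}(\rho^{S})>0$ precisely when $\rho^{S}\notin\mathcal{I}$. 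Hence the output carries strictly positive relative entropy of entanglement, and since $E_{\mathrm{r}}$ vanishes only on separable states, the output is entangled.

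The load-bearing ingredient is the quantitative saturation $E_{\mathrm{r}}=C_{\mathrm{r}}$ via the generalized CNOT, which upgrades the merely qualitative statement ``$\rho^{S}$ has some coherence'' into the sharp guarantee that this \emph{specific} incoherent operation yields a strictly entangled output; without it one would know only that coherence is necessary, not sufficient. Once this is combined with the faithfulness of $C_{\mathrm{r}}$ (every coherent state has $C_{\mathrm{r}}>0$) and of $E_{\mathrm{r}}$ (only separable states have $E_{\mathrm{r}}=0$), the argument closes with no further estimates. The only point requiring care is the dimensional hypothesis $d_{A}\geq d_{S}$ under which the saturation holds, but this imposes no restriction since the ancilla size is ours to choose.
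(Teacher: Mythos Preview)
Your proof is correct and follows essentially the same route as the paper: for the forward direction the paper invokes Theorem~\ref{thm:1} (which amounts to your direct separability observation), and for the converse it uses exactly the faithfulness of $C_{\mathrm{r}}$ together with the saturation identity $E_{\mathrm{r}}^{S:A}=C_{\mathrm{r}}$ via the generalized CNOT, Eq.~(\ref{eq:optimal}). No substantive differences.
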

\begin{proof} \vspace*{-2pt} If $\rho^{S}$ is incoherent, it cannot be converted to an entangled state via incoherent operations by Theorem~\ref{thm:1}.
Conversely, if $\rho^{S}$ is coherent, it has nonzero
relative entropy of coherence $C_{\mathrm{r}}(\rho^{S})>0$. By
Eq.~(\ref{eq:optimal}), there exists an incoherent operation $\Lambda^{SA}$
leading to nonzero relative entropy of entanglement $E_{\mathrm{r}}^{S:A}(\Lambda^{SA}[\rho^{S}\otimes\ket{0}\bra{0}^{A}])>0$, concluding the proof.
\end{proof}
\smallskip{}

\noindent\textbf{\emph{Quantifying coherence with entanglement}}\textbf{.}---We are  ready to present the central result of the Letter. Reversing the perspective, Theorem~\ref{thm:1}
can also be seen as providing a lower bound on distance-based
measures of coherence through conversion to entanglement: precisely,
the coherence degree $C_{D}$ of a state $\rho^{S}$ is always
bounded below by the maximal entanglement degree $E_{D}$ generated
from it by incoherent operations.

Going now beyond the specific
setting of distance-based measures, we will show that such a maximization
of the output entanglement, for any  fully general entanglement
monotone, leads to a quantity which yields a valid quantifier
of input coherence in its own right. We specifically define the family of {\em entanglement-based coherence
measures} $\{C_{E}\}$ as follows:
\begin{equation}
C_{E}(\rho^{S})=\lim_{d_{A}\rightarrow\infty}\bigg\{ \sup_{\Lambda^{SA}}E^{S:A}\left(\Lambda^{SA}\left[\rho^{S}\otimes\ket{0}\bra{0}^{A}\right]\right)\bigg\} .\label{eq:Ce}
\end{equation}
Here, $E$ is an arbitrary entanglement measure and the supremum is taken
over all incoherent operations $\Lambda^{SA}$
\footnote{Note that the limit $d_{A}\rightarrow\infty$ in Eq.~(\ref{eq:Ce})
is well defined, since the supremum $\sup_{\Lambda^{SA}}E^{S:A}\left(\Lambda^{SA}\left[\rho^{S}\otimes\ket{0}\bra{0}^{A}\right]\right)$
cannot decrease with increasing dimension $d_{A}$ of the ancilla. %
}.

It is crucial to benchmark the validity of $C_{E}$ for any $E$
as a proper measure of coherence. Remarkably, we find that $C_{E}$
satisfies all the aforementioned conditions (C1)--(C3) given an arbitrary
entanglement monotone $E$, with the addition of (C4) if $E$ is convex
as well. We namely get the following result:
\begin{thm}
\label{thm:monotone}$C_{E}$
is a (convex) coherence monotone for any (convex) entanglement monotone
$E$. \vspace*{-4pt}\end{thm}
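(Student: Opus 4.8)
The plan is to verify the four axioms directly from the definition~(\ref{eq:Ce}), leaning on three structural facts: incoherent operations are closed under composition and under tensoring with the identity on the ancilla; an incoherent operation applied to $\sigma^{S}\otimes\ket{0}\bra{0}^{A}$ with $\sigma^{S}$ incoherent yields an incoherent---hence separable---state; and $E$ is, by assumption, nonincreasing on average under selective local operations. For (C1), nonnegativity is immediate from $E\geq0$. If $\sigma^{S}$ is incoherent then every $\Lambda^{SA}[\sigma^{S}\otimes\ket{0}\bra{0}^{A}]$ is separable, so $E=0$ and $C_{E}(\sigma^{S})=0$; conversely, if $\rho^{S}$ is coherent, the generalized CNOT of Eq.~(\ref{eq:CNOT}) produces an entangled---indeed distillable, since $E_{\mathrm{d}}\geq C_{\mathrm{r}}(\rho^{S})>0$---state, on which the entanglement monotone $E$ is strictly positive, giving $C_{E}(\rho^{S})>0$. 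For (C2), given an incoherent $\Lambda^{S}$, I would use that $M^{SA}\circ(\Lambda^{S}\otimes\mathrm{id}^{A})$ is again incoherent, so every state entering the maximization for $C_{E}(\Lambda^{S}[\rho^{S}])$ also enters the one for $C_{E}(\rho^{S})$; taking suprema and the limit $d_{A}\to\infty$ yields $C_{E}(\Lambda^{S}[\rho^{S}])\leq C_{E}(\rho^{S})$.

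The main obstacle is (C3), monotonicity on average under a selective incoherent operation $\{K_{l}\}$ on $S$, the difficulty being that the branches $\varsigma_{l}=K_{l}\rho^{S}K_{l}^{\dagger}/p_{l}$ call for different optimal conversions $\Lambda_{l}^{SA}$, whereas $C_{E}(\rho^{S})$ permits only a single incoherent operation. My strategy is to run all branches coherently inside one incoherent operation by adjoining a classical flag register $F$ on the ancilla side. Fix $\epsilon>0$ and, for each $l$, pick an incoherent $\Lambda_{l}^{SA}$ (with a common, sufficiently large ancilla dimension) such that $E^{S:A}(\Lambda_{l}^{SA}[\varsigma_{l}\otimes\ket{0}\bra{0}^{A}])\geq C_{E}(\varsigma_{l})-\epsilon$. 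I then build the incoherent operation that (i) applies the Kraus operators $K_{l}^{S}\otimes\ket{l}\bra{0}^{F}$, writing the outcome into $F$, and (ii) applies $\Lambda_{l}^{SA}$ controlled on $F=l$ (both are readily checked to be incoherent and trace preserving).

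The key observation that makes this work is that step (i), summed over $l$, is a \emph{non-selective} incoherent operation, so it maps $\rho^{S}\otimes\ket{0}\bra{0}^{F}$ to the classically correlated---hence $S{:}F$ separable---state $\sum_{l}p_{l}\,\varsigma_{l}\otimes\ket{l}\bra{l}^{F}$, and no spurious entanglement is injected by the flag. After step (ii) the global state is $\Omega=\sum_{l}p_{l}\,\omega_{l}^{SA}\otimes\ket{l}\bra{l}^{F}$ with $\omega_{l}=\Lambda_{l}^{SA}[\varsigma_{l}\otimes\ket{0}\bra{0}^{A}]$, and $\Omega$ is reached from $\rho^{S}\otimes\ket{0}\bra{0}^{AF}$ by an incoherent operation, so (treating $AF$ as the ancilla, covered by $d_{A}\to\infty$) $C_{E}(\rho^{S})\geq E^{S:AF}(\Omega)$. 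Reading $F$ is a selective local measurement on the ancilla side, so the monotone property gives $E^{S:AF}(\Omega)\geq\sum_{l}p_{l}E^{S:A}(\omega_{l})\geq\sum_{l}p_{l}(C_{E}(\varsigma_{l})-\epsilon)$; letting $\epsilon\to0$ establishes (C3).

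Finally, for (C4) with $E$ convex, I would fix any incoherent $\Lambda^{SA}$ and combine the linearity of $\Lambda^{SA}$ with convexity of $E$ to get $E^{S:A}(\Lambda^{SA}[(\sum_{i}q_{i}\rho_{i})\otimes\ket{0}\bra{0}^{A}])\leq\sum_{i}q_{i}E^{S:A}(\Lambda^{SA}[\rho_{i}\otimes\ket{0}\bra{0}^{A}])$; bounding each summand by its own supremum, using $\sup_{\Lambda}\sum_{i}\leq\sum_{i}\sup_{\Lambda}$, and passing to $d_{A}\to\infty$ gives $C_{E}(\sum_{i}q_{i}\rho_{i})\leq\sum_{i}q_{i}C_{E}(\rho_{i})$. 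The only genuinely delicate points are the faithfulness half of (C1), which relies on $E$ being strictly positive on the entangled state produced by the generalized CNOT, and the flag construction in (C3); everything else is bookkeeping with the defining supremum and the monotone limit in $d_{A}$.
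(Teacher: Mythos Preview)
Your proof is correct and follows essentially the same approach as the paper: (C1), (C2), and (C4) match almost verbatim, and for (C3) the paper likewise adjoins a flag register on the ancilla side, assembles a single incoherent operation combining the $K_i$ with the branch-dependent $\Lambda_i^{SA}$, and invokes the monotone inequality $E^{S:AB}\big(\sum_i p_i\rho_i^{SA}\otimes\ket{i}\bra{i}^B\big)\geq\sum_i p_i E^{S:A}(\rho_i^{SA})$, which is exactly your ``reading $F$'' step. The only cosmetic differences are that the paper argues by contradiction rather than via $\epsilon$-approximation, and uses incoherent shift unitaries $U_i^B$ on the flag instead of your $\ket{l}\bra{0}^F$ (which makes trace preservation hold on the full space automatically, whereas your step (i) as written is trace preserving only on the $\ket{0}^F$ sector and would need a trivial extension).
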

\begin{proof} Using the arguments presented above
it is easy to see that $C_{E}$ is nonnegative, and zero if and only
if the state $\rho^{S}$ is incoherent. Moreover, $C_{E}$ does not
increase under incoherent operations $\Lambda^{S}$ performed on the
system $S$. This can be seen directly from the definition of $C_{E}$
in Eq.~(\ref{eq:Ce}), noting that an incoherent operation $\Lambda^{S}$
on the system $S$ is also incoherent with respect to $SA$. The proof
that $C_{E}$ further satisfies condition (C3) is presented in the
Supplemental Material \cite{epaps}. There we also show that $C_{E}$ is convex
for any convex entanglement monotone $E$, i.e. (C4) is fulfilled
as well in this case. \end{proof}

These powerful results complete
the parallel between coherence and entanglement, \textit{de facto}
establishing their full quantitative equivalence within the respective
resource theories. Thanks to Theorem~\ref{thm:monotone}, one can now use the comprehensive knowledge acquired in entanglement theory in the last two decades \cite{Vedral1997,Vidal2000,Wootters97,Plenio2007}, to address the quantification of coherence in a variety of operational settings, and to define and validate physically motivated coherence monotones.
For instance,  $C_{E}$ as defined by Eq.~(\ref{eq:Ce}) amounts to the previously defined relative entropy
of coherence \cite{Baumgratz2014}, if $E$ is  the relative entropy of entanglement or the
distillable entanglement.

Furthermore, we can now  focus on the relevant case of $E$ being the geometric entanglement \cite{Wei2003,Streltsov2010} $E_{\mathrm{g}}$, defined for a bipartite state $\rho$ as $E_{\mathrm{g}}(\rho)=1-\max_{\chi\in{\cal S}}F(\rho,\chi)$, where
the maximum is taken over all separable states $\chi \in {\cal S}$, and
$F(\rho,\varsigma)=\big(\mathrm{Tr}(\sqrt{\rho}\varsigma\sqrt{\rho}\,)^{1/2}\big)^{2}$ is the
Uhlmann fidelity. The geometric entanglement coincides with its expression obtained via  convex roof  \cite{Horodecki2003,Streltsov2010},
$E_{\mathrm{g}}(\rho)=\min\sum_{i}p_{i}E_{\mathrm{g}}(\ket{\psi_{i}})$,
where the minimum is over all decompositions of  $\rho=\sum_{i}p_{i}\ket{\psi_{i}}\bra{\psi_{i}}$.
In the Supplemental Material \cite{epaps}, we show that the {\it  geometric measure of coherence} $C_\mathrm{g}$, associated to $E_{\mathrm{g}}$ via Eq.~(\ref{eq:Ce}), can be evaluated explicitly and amounts to $C_{\mathrm{g}}(\rho)=1-\max_{\sigma\in{\cal I}}F(\rho,\sigma)$,  where the maximum is taken over all incoherent states $\sigma \in {\cal I}$. The incoherent operation which attains the maximum in Eq.~(\ref{eq:Ce}) is  again the generalized CNOT  defined by Eq.~(\ref{eq:CNOT}). Due to Theorem~\ref{thm:monotone}, since the geometric measure of entanglement is a full convex entanglement monotone \cite{Wei2003,Plenio2007}, we have just proven that the geometric measure of coherence $C_\mathrm{g}$ is a full convex coherence monotone obeying (C1)--(C4). This settles an important question left open in previous literature \cite{Baumgratz2014,Frozen2014}. Remarkably, the geometric measure $C_\mathrm{g}$ is also~analytically computable~for an arbitrary state $\rho$ of one qubit \cite{epaps}, as follows:
\begin{equation}\label{CGrho}
\hbox{$C_{\mathrm{g}}(\rho)=\frac12\left(1-\sqrt{1-4|\rho_{01}|^2}\right)$}\,,
\end{equation}
where $\rho_{01}$ is the off-diagonal element of $\rho$ with respect to the reference basis. Notice that $C_{\mathrm{g}}$ in this case is a simple~monoto-nic~function of the $l_{1}$-norm of coherence \cite{Baumgratz2014}, $C_{l_1}(\rho)=2|\rho_{01}|$.

Some of these results extend to any distance-based entanglement measure $E_{g(F)}$ defined via Eq.~(\ref{ECD}) with $D_{g(F)}(\rho,\varsigma)=g\big(F(\rho,\varsigma)\big)$, where $g(F)$ is a  nonincreasing function of the fidelity $F$. These include the Bures measure of entanglement \cite{Vedral1997,Vedral1998}, with  $g(F)=2(1-\sqrt{F})$, and the Groverian measure of entanglement \cite{Biham2002,Shapira2006}, with $g(F)=\sqrt{1-F}$. For any such entanglement $E_{g(F)}$, the  corresponding quantifier of coherence is $C_{g(F)}(\rho) = \min_{\sigma\in{\cal I}}D_{g(F)}(\rho,\sigma)$ \cite{epaps}, and Theorem~\ref{thm:1} holds with equality for any matching pair $E_{g(F)}$ and $C_{g(F)}$ \footnote{This defines a whole new class of operational coherence measures. Note, however, that not all these measures will satisfy properties (C3) or (C4) \cite{Fan14}, as the corresponding entanglement measures are not all monotonic under selective LOCC operations or convex; this depends on the specific form of $g(F)$.}.


\noindent\textbf{\emph{Conclusions}}\textbf{.}---In this Letter we have established a rigorous and general framework for the interconversion between two quantum resources, coherence on one hand, and entanglement on
the other hand, via incoherent operations. Our framework can be interpreted in both ways: on one
hand, it demonstrates the formal potential of coherence for entanglement generation
(although not necessarily useful in practical applications, as cheaper schemes for entanglement creation might be available); on the other hand, it demonstrates the usefulness of entanglement to obtain and validate measures of coherence. Building on
this connection, we proposed in fact a family of coherence quantifiers
in terms of the maximal entanglement that can be generated
by incoherent operations (see Fig.~\ref{cofig}). The proposed coherence quantifiers
satisfy all the necessary criteria to be \textit{bona fide}
coherence monotones \cite{Baumgratz2014}. In particular, the relative entropy of coherence and the geometric measure of coherence have been (re)defined and interpreted operationally in terms of the maximum converted distillable and geometric entanglement, respectively.

Our framework bears some resemblance with, and may be regarded as the
general finite-dimensional counterpart to, the established (qualitative
and quantitative) equivalence between input nonclassicality, intended
as superposition of optical coherent states, and output entanglement
created by passive quantum optical elements such as  beam splitters
\cite{Kim02,Asboth05,Vogel2,Kil15}. The results presented here
should also be compared to the scheme for activating distillable entanglement
via premeasurement interactions \cite{Piani2011,Alex2011,Gharibian2011}
from quantum discord, a measure of nonclassical correlations going
beyond entanglement \cite{Modi2012,Streltsov2014}. In the latter approach, which has attracted a large amount of attention recently \cite{Modi2012,Piani2012,Nakano2013,Adesso2014,Mataloni2015}, measures of discord in a composite system are defined in terms of the minimum entanglement created with an ancillary system via fixed premeasurement interactions defined as in Eq.~(\ref{eq:CNOT}), where the minimization is over local unitaries on the system regulating the control bases before the interaction. By contrast, in this work the reference basis is fixed, and a maximization of the output entanglement over all incoherent operations returns a measure of coherence for the initial system. One might combine the two approaches in order to define a unified framework for interconversion among coherence, discord, and entanglement, whereby discord-type measures could be interpreted as measures of bipartite coherence suitably minimized over local product reference bases (see e.g. \cite{Frozen2014,GerTalk,NewBalance2015}). Exploring these connections further will be the subject of another work.

The theory of entanglement has been the cornerstone of major developments
in quantum information theory and has triggered the advancement of modern quantum technologies. The construction of a physically meaningful and mathematically rigorous
quantitative theory of coherence can improve our perception of genuine quantumness, and guide our understanding of
nascent fields such as quantum biology and nanoscale thermodynamics.
By uncovering a powerful operational connection between coherence
and entanglement, we believe
the present work delivers a substantial step in this direction.

\smallskip{}

\noindent\textbf{\emph{Acknowledgments}}\textbf{.}---We thank M.
Ahmadi, J. Asboth, R. Augusiak, T. Bromley, M. Cianciaruso, W. Vogel,
A. Winter, and especially M. Piani for fruitful discussions. A.S. acknowledges
financial support by the Alexander von Humboldt-Foundation, the John
Templeton Foundation, the EU (IP SIQS), the ERC AdG OSYRIS, and the
EU-Spanish Ministry (CHISTERA DIQIP). U.S., H.S.D., and M.N.B. acknowledge the Department of Atomic Energy, Government of India for research fellowship. G.A. acknowledges financial
support by the Foundational Questions Institute (Grant No.~FQXi-RFP3-1317) and
the ERC StG GQCOP (Grant Agreement No.~637352).


%


\cleardoublepage{}

\appendix*
\setcounter{equation}{0}
\pagenumbering{roman}
\setcounter{page}{1}

\section{Supplemental Material \\ Measuring Quantum Coherence with Entanglement}

\subsection{Proof of monotonicity (C3) in Theorem~3}

Here we prove that for any entanglement monotone $E$ the coherence
quantifier \emph{
\begin{equation}
C_{E}(\rho^{S})=\lim_{d_{A}\rightarrow\infty}\left\{ \sup_{\Lambda^{SA}}E^{S:A}\left(\Lambda^{SA}\left[\rho^{S}\otimes\ket{0}\bra{0}^{A}\right]\right)\right\} \label{eq:Ce-1}
\end{equation}
}does not increase on average under (selective) incoherent operations:
\begin{equation}
\sum_{i}p_{i}C_{E}(\sigma_{i}^{S})\leq C_{E}(\rho^{S})\label{eq:monotone}
\end{equation}
with probabilities $p_{i}=\mathrm{Tr}[K_{i}\rho^{S}K_{i}^{\dagger}]$,
quantum states $\sigma_{i}^{S}=K_{i}\rho^{S}K_{i}^{\dagger}/p_{i}$,
and incoherent Kraus operators $K_{i}$ acting on the system $S$.

Due to the definition of $C_{E}$, the amount of entanglement between
the system and ancilla cannot exceed $C_{E}$ for any incoherent operation
$\Lambda^{SA}$, i.e., $ $
\begin{equation}
E^{S:A}\left(\Lambda^{SA}\left[\rho^{S}\otimes\ket{0}\bra{0}^{A}\right]\right)\leq C_{E}\left(\rho^{S}\right).
\end{equation}
Note that this statement is also true if we introduce another particle
$B$ in an incoherent state $\ket{0}\bra{0}^{B}$. Then, for any tripartite
incoherent operation $\Lambda^{SAB}$ it holds:
\begin{equation}
E^{S:AB}\left(\Lambda^{SAB}\left[\rho^{S}\otimes\ket{0}\bra{0}^{A}\otimes\ket{0}\bra{0}^{B}\right]\right)\leq C_{E}\left(\rho^{S}\right).\label{eq:proof-3}
\end{equation}

We will now prove the claim by contradiction, showing that a violation
of Eq.~(\ref{eq:monotone}) also implies a violation of Eq.~(\ref{eq:proof-3}).
If Eq.~(\ref{eq:monotone}) is violated, then by definition of $C_{E}$
there exists a set of incoherent operations $\Lambda_{i}^{SA}$ such
that the following inequality is true for $d_{A}$ large enough:
\begin{equation}
\sum_{i}p_{i}E^{S:A}\left(\Lambda_{i}^{SA}\left[\sigma_{i}^{S}\otimes\ket{0}\bra{0}^{A}\right]\right)>C_{E}\left(\rho^{S}\right).\label{eq:proof}
\end{equation}
In the next step we introduce an additional particle $B$ and use
the general relation
\begin{equation}
E^{S:AB}\left(\sum_{i}p_{i}\rho_{i}^{SA}\otimes\ket{i}\bra{i}^{B}\right)\geq\sum_{i}p_{i}E^{S:A}\left(\rho_{i}^{SA}\right)
\end{equation}
which is valid for any entanglement monotone $E$. With this in mind,
the inequality (\ref{eq:proof}) implies
\begin{equation}
E^{S:AB}\left(\sum_{i}p_{i}\Lambda_{i}^{SA}\left[\sigma_{i}^{S}\otimes\ket{0}\bra{0}^{A}\right]\otimes\ket{i}\bra{i}^{B}\right)>C_{E}\left(\rho^{S}\right).
\end{equation}
Recall that the states $\sigma_{i}^{S}$ are obtained from the state
$\rho^{S}$ by the means of an incoherent operation, and thus we can
use the relation $p_{i}\sigma_{i}^{S}=K_{i}\rho^{S}K_{i}^{\dagger}$
with incoherent Kraus operators $K_{i}$. This leads us to the following
expression:
\begin{equation}
E^{S:AB}\left(\sum_{i}\Lambda_{i}^{SA}\left[K_{i}\rho^{S}K_{i}^{\dagger}\otimes\ket{0}\bra{0}^{A}\right]\otimes\ket{i}\bra{i}^{B}\right)>C_{E}\left(\rho^{S}\right).\label{eq:proof-1}
\end{equation}

It is now crucial to note that the state on the left-hand side of
the above expression can be regarded as arising from a tripartite
incoherent operation $\Lambda^{SAB}$ acting on the total state $\rho^{S}\otimes\ket{0}\bra{0}^{A}\otimes\ket{0}\bra{0}^{B}$:
\begin{align}
 & \Lambda^{SAB}\left[\rho^{S}\otimes\ket{0}\bra{0}^{A}\otimes\ket{0}\bra{0}^{B}\right]\nonumber \\
 & =\sum_{i}\Lambda_{i}^{SA}\left[K_{i}\rho^{S}K_{i}^{\dagger}\otimes\ket{0}\bra{0}^{A}\right]\otimes\ket{i}\bra{i}^{B}.\label{eq:proof-2}
\end{align}
This can be seen explicitly by introducing the Kraus operators $M_{ij}$
corresponding to the operation $\Lambda^{SAB}$:
\begin{equation}
M_{ij}^{SAB}=L_{ij}^{SA}\left(K_{i}^{S}\otimes\openone^{A}\right)\otimes U_{i}^{B}.
\end{equation}
Here, $L_{ij}$ are incoherent Kraus operators corresponding to the
incoherent operation $\Lambda_{i}^{SA}$:
\begin{equation}
\Lambda_{i}^{SA}\left[\rho^{SA}\right]=\sum_{j}L_{ij}\rho^{SA}L_{ij}^{\dagger}.
\end{equation}
The unitaries $U_{i}^{B}$ are incoherent and defined as
\begin{equation}
U_{i}^{B}=\sum_{j=0}^{d_{B}-1}\ket{\mathrm{mod}(i+j,d_{B})}\bra{j}^{B}.\label{eq:Ui}
\end{equation}
With these definitions we see that $M_{ij}$ are indeed incoherent
Kraus operators. Moreover, it can be verified by inspection that the
incoherent operation $\Lambda^{SAB}$ arising from these Kraus operators
also satisfies Eq.~(\ref{eq:proof-2}).

Finally, using Eq.~(\ref{eq:proof-2}) in Eq.~(\ref{eq:proof-1})
we arrive at the following inequality:
\begin{equation}
E^{S:AB}\left(\Lambda^{SAB}\left[\rho^{S}\otimes\ket{0}\bra{0}^{A}\otimes\ket{0}\bra{0}^{B}\right]\right)>C_{E}\left(\rho^{S}\right).
\end{equation}
This is the desired contradiction to Eq.~(\ref{eq:proof-3}), and
completes the proof of property (C3) for $C_{E}$, thus establishing
that $C_{E}$ is a coherence monotone for any entanglement monotone
$E$.

\subsection{Proof of convexity (C4) in Theorem~3}

Here we show that the quantifier of coherence $C_{E}$ given in Eq.~(\ref{eq:Ce-1})
is convex for any convex entanglement measure $E$:
\begin{equation}
C_{E}\left(\sum_{i}p_{i}\rho_{i}^{S}\right)\leq\sum_{i}p_{i}C_{E}\left(\rho_{i}^{S}\right)\label{eq:convexity}
\end{equation}
for any quantum states $\rho_{i}^{S}$ and probabilities $p_{i}$.
For this, note that by convexity of the entanglement quantifier $E$
it follows:
\begin{align}
 & E^{S:A}\left(\Lambda^{SA}\left[\sum_{i}p_{i}\rho_{i}^{S}\otimes\ket{0}\bra{0}^{A}\right]\right)\nonumber \\
 & \leq\sum_{i}p_{i}E^{S:A}\left(\Lambda^{SA}\left[\rho_{i}^{S}\otimes\ket{0}\bra{0}^{A}\right]\right).
\end{align}
Taking the supremum over all incoherent operations $\Lambda^{SA}$
together with the limit $d_{A}\rightarrow\infty$ on both sides of
this inequality we obtain the following result:
\begin{equation}
 C_{E}\left(\sum_{i}p_{i}\rho_{i}^{S}\right)\label{eq:convexity_proof} \leq\lim_{d_{A}\rightarrow\infty}\sup_{\Lambda^{SA}}\left\{ \sum_{i}p_{i}E^{S:A}\left(\Lambda^{SA}\left[\rho_{i}^{S}\otimes\ket{0}\bra{0}^{A}\right]\right)\right\}
\end{equation}
Finally, note that the right-hand side of this inequality cannot decrease
if the supremum over incoherent operations $\Lambda^{SA}$ and the
limit $d_{A}\rightarrow\infty$ are performed on each term of the
sum individually:
\begin{align}
 & \lim_{d_{A}\rightarrow\infty}\sup_{\Lambda^{SA}}\left\{ \sum_{i}p_{i}E^{S:A}\left(\Lambda^{SA}\left[\rho_{i}^{S}\otimes\ket{0}\bra{0}^{A}\right]\right)\right\} \nonumber \\
 & \leq\sum_{i}p_{i}\lim_{d_{A}\rightarrow\infty}\sup_{\Lambda^{SA}}E^{S:A}\left(\Lambda^{SA}\left[\rho_{i}^{S}\otimes\ket{0}\bra{0}^{A}\right]\right)\\
 & =\sum_{i}p_{i}C_{E}\left(\rho_{i}^{S}\right).\nonumber
\end{align}
Together with Eq.~(\ref{eq:convexity_proof}), this completes the
proof of convexity in Eq.~(\ref{eq:convexity}).

\subsection{Geometric entanglement and coherence}

We will now show that the bound provided in Theorem~1 of the main
text can be saturated for the geometric entanglement $E_{\mathrm{g}}$
and geometric coherence $C_{\mathrm{g}}$. The former is defined as
\begin{equation}
E_{\mathrm{g}}(\rho)=1-\max_{\chi\in{\cal S}}F(\rho,\chi),\label{eq:distance}
\end{equation}
where the maximum is taken over all separable states ${\cal S}$,
and
\begin{equation}
F(\rho,\varsigma)=\left(\mathrm{Tr}\sqrt{\sqrt{\rho}\varsigma\sqrt{\rho}}\right)^{2}
\end{equation}
is the fidelity. The geometric entanglement defined in Eq.~(\ref{eq:distance}) coincides with its expression obtained via a convex roof construction \cite{Streltsov2010}:
\begin{equation}
E_{\mathrm{g}}(\rho)=\min\sum_{i}p_{i}E_{\mathrm{g}}(\ket{\psi_{i}}),\label{eq:convex}
\end{equation}
where the minimum is taken over all decompositions of the state $\rho=\sum_{i}p_{i}\ket{\psi_{i}}\bra{\psi_{i}}$.
The latter expression (\ref{eq:convex}) is the original definition
of the geometric entanglement for mixed states \cite{Wei2003}, and
the equivalence of Eqs.~(\ref{eq:distance}) and (\ref{eq:convex})
was shown in \cite{Streltsov2010}.

The geometric coherence $C_{\mathrm{g}}$
can be defined similarly:
\begin{equation}
C_{\mathrm{g}}(\rho)=1-\max_{\sigma\in{\cal I}}F(\rho,\sigma),
\end{equation}
where the maximum is taken over all incoherent states $\sigma \in {\cal I}$.

Equipped with these tools, we are now in position to prove the saturation
of the bound in Theorem~1 for these measures of entanglement and coherence.
In particular, we will show the existence of an incoherent operation
$\Lambda^{SA}$ such that
\begin{equation}
E_{\mathrm{g}}^{S:A}\left(\Lambda^{SA}\left[\rho^{S}\otimes\ket{0}\bra{0}^{A}\right]\right)=C_{\mathrm{g}}\left(\rho^{S}\right)\label{eq:main}
\end{equation}
if the dimension of the ancilla is not smaller than the dimension
of the system, $d_{A}\geq d_{S}$. As we will further show,
the optimal incoherent operation achieving this equality is the generalized
CNOT operation $U$ given in Eq.~(\ref{eq:CNOT}) of the main text.

To prove Eq.~(\ref{eq:main}), we first recall that the final
state after the application of the generalized CNOT operation $U$
is
\begin{equation}
\rho_{\mathrm{mc}}^{SA}=U\left(\rho^{S}\otimes\ket{0}\bra{0}^{A}\right)U^{\dagger}=\sum_{i,j}\rho_{ij}\ket{i}\bra{j}^{S}\otimes\ket{i}\bra{j}^{A},\label{eq:mc}
\end{equation}
where $\rho_{ij}$ are matrix elements of $\rho^{S}=\sum_{i,j}\rho_{ij}\ket{i}\bra{j}^{S}$.
States of the form (\ref{eq:mc}) are known as maximally correlated
states. As shown in section \ref{sec:2} of this Supplemental Material,
there always exists a separable maximally correlated
state
\begin{equation}
\chi_{\mathrm{mc}}^{SA}=\sum_{i}q_{i}\ket{i}\bra{i}^{S}\otimes\ket{i}\bra{i}^{A}\label{eq:sigma}
\end{equation}
 which is a closest separable state to $\rho_{\mathrm{mc}}^{SA}$:
\begin{equation}
E_{\mathrm{g}}^{S:A}\left(\rho_{\mathrm{mc}}^{SA}\right)=1-F\left(\rho_{\mathrm{mc}}^{SA},\chi_{\mathrm{mc}}^{SA}\right).\label{eq:Eg-2}
\end{equation}

These results imply that the geometric coherence
of  $\rho^{S}$ is bounded above by the geometric entanglement
of $\rho_{\mathrm{mc}}^{SA}$:
\begin{equation}
C_{\mathrm{g}}\left(\rho^{S}\right)\leq E_{\mathrm{g}}^{S:A}\left(\rho_{\mathrm{mc}}^{SA}\right).
\end{equation}
This follows by using Eq.~(\ref{eq:Eg-2}) together with the
equality
\begin{equation}
F\left(\rho_{\mathrm{mc}}^{SA},\sigma_{\mathrm{mc}}^{SA}\right)=F\left(\rho^{S},\sigma^{S}\right),
\end{equation}
where $\sigma^{S}=\sum_{i}q_{i}\ket{i}\bra{i}^{S}$ is an incoherent
state with the same coefficients $q_{i}$ as in Eq.~(\ref{eq:sigma}).
On the other hand, Theorem~1 in the main text implies the inequality
\begin{equation}
C_{\mathrm{g}}\left(\rho^{S}\right)\geq E_{\mathrm{g}}^{S:A}\left(\rho_{\mathrm{mc}}^{SA}\right),
\end{equation}
and thus we arrive at the desired statement in Eq.~(\ref{eq:main}).

We further note that the arguments presented above can also be applied
to any distance-based quantifiers of entanglement $E_{D}$ and coherence
$C_{D}$ if the distance is contractive under quantum operations,
and for any maximally correlated state $\rho_{\mathrm{mc}}^{SA}$
there exists a separable maximally correlated state $\chi_{\mathrm{mc}}^{SA}\in{\cal S}$
(which may depend on $\rho_{\mathrm{mc}}^{SA}$) such that
\begin{equation}
E_{D}^{S:A}\left(\rho_{\mathrm{mc}}^{SA}\right)=D\left(\rho_{\mathrm{mc}}^{SA},\chi_{\mathrm{mc}}^{SA}\right).
\end{equation}
In particular, this is the case for the geometric entanglement and
coherence, where the distance is given by $D(\rho,\sigma)=1-F(\rho,\sigma)$.
As discussed in the main text, these results can be immediately extended to any distance
\begin{equation}
D(\rho,\sigma)=g[F(\rho,\sigma)]
\end{equation}
which is a nonincreasing function of fidelity.


\subsection{\label{sec:2}Geometric entanglement for maximally correlated states }

In this section we will show that for any maximally correlated state
\begin{equation}
\rho_{\mathrm{mc}}=\sum_{i,j}\rho_{ij}\ket{i}\bra{j}\otimes\ket{i}\bra{j}
\end{equation}
 there exists a separable maximally correlated state
\begin{equation}
\chi_{\mathrm{mc}}=\sum_{i}q_{i}\ket{i}\bra{i}\otimes\ket{i}\bra{i}
\end{equation}
 such that
\begin{equation}
E_{\mathrm{g}}(\rho_{\mathrm{mc}})=1-F(\rho_{\mathrm{mc}},\chi_{\mathrm{mc}}).\label{eq:Eg}
\end{equation}
This can be proven by using results from Refs. \cite{Horodecki2003,Streltsov2010}.
In particular, given a maximally correlated state $\rho_{\mathrm{mc}}$,
consider its arbitrary decomposition into pure states $\ket{\psi_{k}}$
with positive probabilities $p_{k}>0$ such that
\begin{equation}
\rho_{\mathrm{mc}}=\sum_{k}p_{k}\ket{\psi_{k}}\bra{\psi_{k}}.
\end{equation}
As is proven on page 6 in \cite{Horodecki2003}, all states $\ket{\psi_{k}}$
in such a decomposition must be linear combinations of product states
$\ket{i}\otimes\ket{i}$:
\begin{equation}
\ket{\psi_{k}}=\sum_{i}c_{i}^{k}\ket{i}\otimes\ket{i}\label{eq:psi}
\end{equation}
with complex coefficients $c_{i}^{k}$.
Consider now an optimal decomposition of the state $\rho_{\mathrm{mc}}$,
i.e., a decomposition which minimizes the average entanglement such
that
\begin{equation}
\sum_{k}p_{k}E_{\mathrm{g}}(\ket{\psi_{k}}\bra{\psi_{k}})=E_{\mathrm{g}}(\rho_{\mathrm{mc}}).
\end{equation}
We further define product states $\ket{\phi_{k}}\in{\cal S}$ to be
the closest product states to $\ket{\psi_{k}}$:
\begin{equation}
E_{\mathrm{g}}(\ket{\psi_{k}})=1-F(\ket{\psi_{k}},\ket{\phi_{k}}).\label{eq:Eg-1}
\end{equation}
Due to Eq.~(\ref{eq:psi}), all states $\ket{\phi_{k}}$ can be chosen
as $\ket{\phi_{k}}=\ket{l_{k}}\otimes\ket{l_{k}}$, where the number
$l_{k}$ corresponds to the coefficient $c_{l_{k}}^{k}$ of the state
$\ket{\psi_{k}}$ with the maximal absolute value: $|c_{l_{k}}^{k}|=\max_{i}|c_{i}^{k}|$.

Finally, consider the separable maximally correlated state
\begin{equation}
\chi_{\mathrm{mc}}=\sum_{k}q_{k}\ket{\phi_{k}}\bra{\phi_{k}}=\sum_{k}q_{k}\ket{l_{k}}\bra{l_{k}}\otimes\ket{l_{k}}\bra{l_{k}}
\end{equation}
with probabilities $q_{k}$ defined as
\begin{equation}
q_{k}=\frac{p_{k}\left[1-E_{\mathrm{g}}(\ket{\psi_{k}})\right]}{1-E_{\mathrm{g}}(\rho_{\mathrm{mc}})}.\label{eq:q}
\end{equation}
As we  now show, $\chi_{\mathrm{mc}}$ is the desired optimal
state, satisfying the equality (\ref{eq:Eg}). This can be seen by
first recalling that the geometric entanglement of the state $\rho_{\mathrm{mc}}$
is bounded above as
\begin{equation}
E_{\mathrm{g}}(\rho_{\mathrm{mc}})\leq1-F(\rho_{\mathrm{mc}},\chi_{\mathrm{mc}}),\label{eq:bound-1}
\end{equation}
since the state $\chi_{\mathrm{mc}}$ is separable. On the other
hand, the square root of the fidelity $\sqrt{F}$ satisfies the
strong concavity relation
\begin{equation}
\sqrt{F(\rho,\varsigma)}\geq\sum_{k}\sqrt{p_{k}q_{k}F(\ket{\psi_{k}},\ket{\phi_{k}})}
\end{equation}
for any two states $\rho=\sum_{k}p_{k}\ket{\psi_{k}}\bra{\psi_{k}}$
and $\varsigma=\sum_{k}q_{k}\ket{\phi_{k}}\bra{\phi_{k}}$; see Theorem~9.7 on page 414 in \cite{Nielsen10} (note that the fidelity
defined there is the square root of the fidelity used in our paper).  Applied to the maximally correlated states $\rho_{\mathrm{mc}}$
and $\chi_{\mathrm{mc}}$, and using  Eqs.~(\ref{eq:Eg-1})
and (\ref{eq:q}), this inequality becomes
\begin{equation}
\sqrt{F(\rho_{\mathrm{mc}},\chi_{\mathrm{mc}})}\geq\sqrt{1-E_{\mathrm{g}}(\rho_{\mathrm{mc}})},
\end{equation}
implying that the geometric entanglement of $\rho_{\mathrm{mc}}$
is bounded below as
\begin{equation}
E_{\mathrm{g}}(\rho_{\mathrm{mc}})\geq1-F(\rho_{\mathrm{mc}},\chi_{\mathrm{mc}}).
\end{equation}
Combining this result with Eq.~(\ref{eq:bound-1}) completes the
proof of Eq.~(\ref{eq:Eg}).

\subsection{Geometric coherence for arbitrary single-qubit states}
Earlier we have proven that the optimal incoherent operation which attains the maximization in Eq.~(\ref{eq:Ce-1}) is the generalized CNOT when $E$ is the geometric measure of entanglement $E_{\mathrm{g}}$.
If the system $S$ is a single qubit ($d_S=2$), the output state of system and ancilla after the CNOT is a two-qubit state.
The geometric entanglement $E_{\mathrm{g}}$ of any bipartite two-qubit state $\varsigma$ is computable in closed form and given by \cite{Wei2003,Streltsov2010}
\begin{align}
E_g^{S:A}(\varsigma) = \frac{1}{2}\left[1- \sqrt{1- \mathcal{C}(\varsigma)^2}\right]\,,
\end{align}
where $\mathcal{C}$ is the concurrence of $\varsigma$ \cite{Wootters97}.

Let $\rho^S = \sum_{i,j=0}^{1}\rho_{ij}\ket{i}\bra{j}^S$ be an arbitrary state of the single qubit $S$, written
with respect to a reference basis $\{\ket{i}\}$.
After applying the CNOT on the above state and the initially incoherent ancilla, we get a maximally correlated two-qubit state
\begin{align}
 \rho_{\mathrm{mc}}^{SA} = \mathrm{CNOT}[\rho^S\otimes\ket{0}\bra{0}^A] = \sum_{i,j=0}^{1}\rho_{ij}\ket{i}\bra{j}^{S}\otimes \ket{i}\bra{j}^{A}.
\end{align}
The concurrence of the above maximally correlated state is easily evaluated as \cite{Wootters97}
\begin{align}
 \mathcal{C}(\rho_{\mathrm{mc}}^{SA}) = 2|\rho_{01}|\,.
\end{align}
The geometric entanglement of $\rho_{\mathrm{mc}}^{SA}$ can be written then as
\begin{align}
 E_g^{S:A}(\rho_{\mathrm{mc}}^{SA}) &= \frac{1}{2}\left[1- \sqrt{1- \mathcal{C}(\rho_{\mathrm{mc}}^{SA})^2}\right]\nonumber\\
 &=\frac{1}{2}\left[1- \sqrt{1- 4|\rho_{01}|^2}\right].
\end{align}
Finally, the geometric coherence for an arbitrary single-qubit state $\rho$ is given by the expression reported in the main text:
\begin{align}
 C_g(\rho) = E_g^{S:A}(\rho_{\mathrm{mc}}^{SA}) =\frac{1}{2}\left[1- \sqrt{1- 4|\rho_{01}|^2}\right].
\end{align}

\end{document}